\documentclass[11pt]{article}
\usepackage{fullpage}
\usepackage{times,paralist}
\usepackage{amsmath,amsfonts,amssymb,amsthm}

\usepackage[colorlinks=true,linkcolor=red,citecolor=blue,urlcolor=red]{hyperref}
\usepackage{framed}
\usepackage{verbatim}

\begin{document}

\newtheorem{fact}{Fact}[section]
\newtheorem{rules}{Rule}[section]
\newtheorem{conjecture}{Conjecture}[section]
\newtheorem{theorem}{Theorem}[section]
\newtheorem{hypothesis}{Hypothesis}
\newtheorem{remark}{Remark}
\newtheorem{proposition}{Proposition}
\newtheorem{corollary}{Corollary}[section]
\newtheorem{lemma}{Lemma}[section]
\newtheorem{claim}{Claim}
\newtheorem{definition}{Definition}[section]

\newenvironment{proofof}[1]{\smallskip
\noindent {\bf Proof of #1.  }}{\hfill$\Box$
\smallskip}

\newenvironment{reminder}[1]{
\noindent {\bf Reminder of #1  }\em}{
}

\def \iouseful {\text{useful}}
\def \BP {{\sf BP}}
\def \coRP {{\sf coRP}}
\def \EXACT {{\sf EXACT}}
\def \SYM {{\sf SYM}}
\def \NOT {{\sf NOT}}
\def \SAC {{\sf SAC}}
\def \SUBEXP {{\sf SUBEXP}}
\def \ZPSUBEXP {{\sf ZPSUBEXP}}
\def \SYMACC {{\sf SYM\text{-}ACC} }
\def \QED {{\hfill$\Box$}}
\def \PH {{\sf PH}}
\def \RP {{\sf RP}}
\def \coMA {{\sf coMA}}
\def \ZPTISP {{\sf ZPTISP}}
\def \REXP {{\sf REXP}}
\def \coNP {{\sf coNP}}
\def \BPP {{\sf BPP}}
\def \NC {{\sf NC}}
\def \ZPE {{\sf ZPE}}
\def \NE {{\sf NE}}
\def \E {{\sf E}}
\def \poly { \text{\rm poly} }
\def \TC {{\sf TC}}
\def \DTS {{\sf DTS}}
\def \R {{\mathbb R}}
\def \Z {{\mathbb Z}}
\def \P {{\sf P}}
\def \MA {{\sf MA}}
\def \AM {{\sf AM}}
\def \MATIME {{\sf MATIME}}
\def \QP {{\sf QP}}
\def \coNQP {{\sf coNQP}}
\def \NP {{\sf NP}}
\def \EXP {{\sf EXP}}
\def \NTISP {{\sf NTISP}}
\def \DTISP {{\sf DTISP}}
\def \TISP {{\sf TISP}}
\def \T {{\sf TIME}}
\def \TIME {\T}
\def \Sig[#1] {{\sf \Sigma}_{#1} }
\def \Pie[#1] {{\sf \Pi}_{#1} }
\def \NTS {{\sf NTS}}
\def \NSP {{\sf NSPACE}}
\def \NSPACE {{\sf NSPACE}}
\def \BPACC {{\sf BPACC}}
\def \ACC {{\sf ACC}}
\def \ASP {{\sf ASPACE}}
\def \io {\textrm{\it io-}}
\def \ro {\textrm{\it ro-}}
\def \ATISP {{\sf ATISP}}
\def \SIZE {{\sf SIZE}}
\def \AT {{\sf ATIME}}
\def \AC {{\sf AC}}
\def \BPTIME {{\sf BPTIME}}
\def \SPACE {{\sf SPACE}}
\def \RE {{\sf RE}}
\def \ZPSUBEXP {{\sf ZPSUBEXP}}
\def \coREXP {{\sf coREXP}}
\def \NQL {{\sf NQL}}
\def \QL {{\sf QL}}
\def \RTIME {{\sf RTIME}}
\def \NSUBEXP {{\sf NSUBEXP}}
\def \MAEXP {{\sf MAEXP}}
\def \PP {{\sf PP}}
\def \PSPACE {{\sf PSPACE}}
\def \NT {{\sf NTIME}}

\def \NTIME {\NT}

\def \ATIME {\AT}

\def \Z {{\mathbb Z}}
\def \F {{\mathbb F}}
\def \NTIBI {{\sf NTIBI}}

\def \TIBI {{\sf TIBI}}

\def \QBFk {{\text{QBF}_k}}

\def \coNT {{\sf coNTIME}}

\def \coNTISP {{\sf coNTISP}}

\def \coNTIME {\coNT}

\def \coNTIBI {{\sf coNTIBI}}
\def \MOD {{\sf MOD}}
\def \ZPEXP {{\sf ZPEXP}}
\def \ZPTIME {{\sf ZPTIME}}
\def \ZPP  {{\sf ZPP}}
\def \DT {{\sf DTIME}}
\def \coNE {{\sf coNE}}
\def \DTIME {\DT}

\def \isin {\subseteq}

\def \isnotin {\nsubseteq}

\def \L {{\sf LOGSPACE}}

\def \LOGSPACE {\L}

\def \N {{\mathbb N}}
\def \NQP {{\sf NQP}}
\def \NEXP {{\sf NEXP}}

\def \coNEXP {{\sf coNEXP}}
\def \AND {{\sf AND}}
\def \OR {{\sf OR}}
\def \SIZE {{\sf SIZE}}
\def \sgn {\text{sgn}}
\def \eps {\varepsilon}

\newcommand{\card}[1]{\ensuremath{\left|#1\right|}}
\newcommand{\ip}[2]{\ensuremath{\left<#1,#2\right>}}
\newcommand{\mv}[2]{\ensuremath{\mathbf{MV}\!\left(#1,#2\right)}}
\newcommand{\ov}{\text{\bf OV}}

\title{The Orthogonal Vectors Conjecture\\ for Branching Programs and Formulas}
\author{
Daniel Kane\\UCSD \and Ryan Williams\footnote{Supported by an NSF CAREER.}\\MIT
}

\date{}

\maketitle

\begin{abstract} In the {\sc Orthogonal Vectors} (\ov{}) problem, we wish to determine if there is an orthogonal pair of vectors among $n$ Boolean vectors in $d$ dimensions. The \emph{OV Conjecture} (OVC) posits that \ov{} requires $n^{2-o(1)}$ time to solve, for all $d=\omega(\log n)$. 
Assuming the OVC, optimal time lower bounds have been proved for many prominent problems in $\P$, such as Edit Distance, Frechet Distance, Longest Common Subsequence, and approximating the diameter of a graph.

We prove that OVC is true in several computational models of interest: 

\begin{compactitem}
\item For all sufficiently large $n$ and $d$, \ov{} for $n$ vectors in $\{0,1\}^d$ has branching program complexity $\tilde{\Theta}(n\cdot \min(n,2^d))$. In particular, the lower bounds match the upper bounds up to polylog factors.
\item \ov{} has Boolean formula complexity $\tilde{\Theta}(n\cdot \min(n,2^d))$, over all complete bases of $O(1)$ fan-in. 
\item \ov{} requires $\tilde{\Theta}(n\cdot \min(n,2^d))$ wires, in formulas comprised of gates computing arbitrary symmetric functions of unbounded fan-in.
\end{compactitem}

Our lower bounds basically match the best known (quadratic) lower bounds for \emph{any} explicit function in those models. Analogous lower bounds hold for many related problems shown to be hard under OVC, such as Batch Partial Match, Batch Subset Queries, and Batch Hamming Nearest Neighbors, all of which have very succinct reductions to \ov{}.
 
The proofs use a certain kind of input restriction that is different from typical random restrictions where variables are assigned independently. We give a sense in which independent random restrictions cannot be used to show hardness, in that OVC is false in the ``average case'' even for $\AC^0$ formulas: 
\begin{compactitem}
\item For every fixed $p \in (0,1)$ there is an $\eps_p > 0$ such that for every $n$ and $d$, \ov{} instances where input bits are independently set to $1$ with probability $p$ (and $0$ otherwise) can be solved with $\AC^0$ formulas of size $O(n^{2-\eps_p})$, on all but a $o_n(1)$ fraction of instances. Moreover, $\eps_p \rightarrow 1$ as $p \rightarrow 1$.
\end{compactitem}

\end{abstract}

\thispagestyle{empty}
\newpage
\setcounter{page}{1}

\setlength{\parskip}{0.7ex plus 0.2ex minus 0.2ex}
\setlength{\parindent}{10pt}

\section{Introduction}

We investigate the following basic combinatorial problem:

\begin{framed} 
{\sc Orthogonal Vectors} (\ov{}) \\
Given: $n$ vectors $v_1,\ldots,v_n \in \{0,1\}^d$\\
Decide: Are there $i,j$ such that $\langle v_i,v_j\rangle = 0$?
\end{framed}

An instructive way of viewing the \ov{} problem is that we have a collection of $n$ sets over $[d]$, and wish to find two disjoint sets among them. The obvious algorithm runs in time $O(n^2 \cdot d)$, and $\log(n)$ factors can be shaved~\cite{Pritchard99}. For $d < \log_2(n)$, stronger improvements are possible: there are folklore $O(n \cdot 2^d \cdot d)$-time and $\tilde{O}(n + 2^d)$-time algorithms (for a reference, see~\cite{CSTExchange-Subset}). Truly subquadratic-time algorithms have recently been developed for even larger dimensionalities: the best known result in this direction is that for all constants $c \geq 1$, \ov{} with $d = c\log n$ dimensions can be solved in $n^{2-1/O(\log c)}$ time~\cite{AbboudWY15,DBLP:conf/soda/ChanW16}. However, it seems inherent that, as the vector dimension $d$ increases significantly beyond $\log n$, the time complexity of \ov{} approaches the trivial $n^2$ bound. 

Over the last several years, a significant body of work has been devoted to understanding the following plausible lower bound conjecture:

\begin{conjecture}[Orthogonal Vectors Conjecture (OVC) \cite{Williams05,DBLP:conf/icalp/AbboudVW14,BackursI15,AbboudBV15}]
For every $\eps > 0$, there is a $c \geq 1$ such that \ov{} cannot be solved in $n^{2-\eps}$ time on instances with $d = c\log n$. 
\end{conjecture}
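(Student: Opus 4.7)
The plan is to derive the conjecture from the \emph{Strong Exponential Time Hypothesis} (SETH), following the reduction of Williams (2005). SETH posits that for every $\eps > 0$ there exists $k$ such that $k$-SAT on $n$ variables requires $2^{(1-\eps)n}$ time. My goal is then: assuming a subquadratic \ov{} algorithm that works for every $d = c\log n$, construct a SETH-refuting algorithm for $k$-SAT for every $k$.

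First I would invoke the Impagliazzo--Paturi--Zane Sparsification Lemma to reduce, at a cost absorbable into any $2^{(1-\eps)n}$ bound, to $k$-CNF instances with $m = O_k(n)$ clauses. Then I apply a split-and-list step: partition the $n$ variables into halves $A, B$ of size $n/2$, and for each of the $N = 2^{n/2}$ assignments $\alpha$ to $A$ build a vector $v_\alpha \in \{0,1\}^m$ with $v_\alpha[i] = 1$ iff $\alpha$ fails to satisfy clause $C_i$ when restricted to its $A$-literals; define $u_\beta$ analogously for assignments $\beta$ to $B$. A satisfying assignment to $\phi$ then corresponds precisely to a pair $(\alpha,\beta)$ such that every clause is satisfied by at least one side, i.e., $\langle v_\alpha, u_\beta \rangle = 0$.

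This produces a (bipartite, easily symmetrized) \ov{} instance with $N = 2^{n/2}$ vectors and dimension $d = m = O_k(n) = O_k(\log N)$. If \ov{} on $N$ vectors with $d = c\log N$ were solvable in $N^{2-\eps}$ time for every constant $c$, then for every $k$ we would solve $k$-SAT in time $2^{(1-\eps/2)n}\cdot \poly(n)$, contradicting SETH. Note the quantifier order matches exactly: OVC's ``for every $\eps$ there is a $c$'' lines up with SETH's ``for every $\eps$ there is a $k$'' through the dependence $c = c(k)$ inherited from sparsification.

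The principal obstacle is that this yields only a \emph{conditional} proof: SETH is itself unproven and is strictly stronger than $\P \neq \NP$. An unconditional proof of OVC seems to require breakthroughs on general time (or time--space) lower bounds far beyond current techniques, since OVC in particular implies $\P \neq \mathsf{LOGSPACE}$-style separations for natural problems. This is precisely why the present paper instead establishes OVC unconditionally in restricted models---branching programs, bounded fan-in formulas, and formulas of unbounded-fan-in symmetric gates---where the standard barriers to lower bounds do not apply.
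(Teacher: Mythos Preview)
The statement is a \emph{conjecture}, not a theorem; the paper does not prove it and does not claim to. It is stated as a hypothesis about general (word-RAM/Turing) time complexity, and the paper explicitly notes that SETH implies OVC via \cite{Williams05} before moving on to prove OVC \emph{unconditionally} only in restricted non-uniform models (Theorems~\ref{formula-lb}, \ref{bp-lb}, \ref{symmetric-lb}).

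Your proposal is a correct sketch of the Williams reduction showing SETH $\Rightarrow$ OVC, and you yourself flag that this is conditional. That is exactly the status the paper assigns it: a one-line citation, not a proof. So there is nothing to compare against --- there is no ``paper's own proof'' of this statement. If the task was to supply a proof of the conjecture as stated, the honest answer is that no unconditional proof is known (and, as you observe, one would entail major complexity separations). Your write-up would be appropriate as a remark explaining why OVC is believed, but it should not be labeled a proof of Conjecture~1.1.
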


In other words, OVC states that \ov{} requires $n^{2-o(1)}$ time on instances of dimension $\omega(\log n)$. The popular Strong Exponential Time Hypothesis~\cite{IP01,CIP09} (on the time complexity of CNF-SAT) implies OVC~\cite{Williams05}. For this reason, and the fact that the \ov{} problem is very simple to work with, the OVC has been the engine under the hood of many recent conditional lower bounds on classic problems solvable within $\P$. For example, the OVC implies nearly-quadratic time lower bounds for Edit Distance~\cite{BackursI15}, approximating the diameter of a graph~\cite{RoddityV13}, Frechet Distance~\cite{DBLP:conf/focs/Bringmann14,DBLP:journals/jocg/BringmannM16}, Longest Common Substring and Local Alignment~\cite{DBLP:conf/icalp/AbboudVW14}, Regular Expression Matching~\cite{DBLP:conf/focs/BackursI16}, Longest Common Subsquence, Dynamic Time Warping, and other string similarity measures~\cite{AbboudBV15,DBLP:conf/focs/BringmannK15}, Subtree Isomorphism and Largest Common Subtree~\cite{DBLP:conf/soda/AbboudBHVZ16}, Curve Simplification~\cite{buchin2016fine}, intersection emptiness of two finite automata~\cite{Wehar16}, first-order properties on sparse finite structures~\cite{DBLP:conf/soda/GaoIKW17} as well as average-case hardness for quadratic-time~\cite{DBLP:journals/iacr/BallRSV17}. Other works surrounding the OVC  (or assuming it) include~\cite{WilliamsY14,DBLP:conf/coco/Williams16,abboud2016approximation,DBLP:conf/lics/ChatterjeeDHL16,DBLP:conf/pods/AhlePR016,DBLP:journals/corr/EvaldD16,DBLP:conf/cpm/IliopoulosR16,DBLP:conf/soda/CairoGR16,KPS17}. 

Therefore it is of strong interest to prove the OVC in reasonable computational models. Note that \ov{} can be naturally expressed as a depth-three formula with unbounded fan-in: an $\OR$ of $n^2$ ${\sf NOR}$s of $d$ $\AND$s on two input variables: an $\AC^0$ formula of size $O(n^2 \cdot d)$. Are there smaller formulas for \ov{}?

\subsection{OVC is True in Restricted Models} 

In this paper, we study how well \ov{} can be solved in the Boolean formula and branching program models. Among the aforementioned \ov{} algorithms, only the first two seem to be efficiently implementable by formulas and branching programs: for example, there are DeMorgan formulas for \ov{} of size only $O(n^2 d)$ and size $O(n d 2^d)$, respectively (see Proposition~\ref{formula-ub}). 

The other algorithms do not seem to be implementable in small space, in particular with small-size branching programs. Our first theorem shows that the simple constructions solving \ov{} with $O(n^2 \cdot d)$ and $O(n \cdot 2^d \cdot d)$ work are essentially optimal for all choices of $d$ and $n$:

\begin{theorem}[OVC For Formulas of Bounded Fan-in] \label{formula-lb} For every constant $c \geq 1$, \ov{} on $n$ vectors in $d$ dimensions does not have $c$-fan-in formulas of size $O(\min\{n^2/(\log d),n \cdot 2^d/(d^{1/2}\log d)\})$, for all sufficiently large $n,d$.
\end{theorem}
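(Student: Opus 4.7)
The plan is to apply Nečiporuk's subfunction-counting method (adapted to $c$-bounded-fan-in formulas), driven by an explicit antichain-based restriction family rather than an independent random restriction (as the abstract flags, independent restrictions actually make \ov{} easy, so we need structure). Partition the $nd$ input variables into $n$ blocks $S_1,\ldots,S_n$, one per vector. For a $c$-fan-in formula $F$ computing \ov{}, let $\ell_i$ be the number of leaves whose label lies in $S_i$, so the formula size equals $\sum_i \ell_i$. Restricting the variables outside $S_i$ yields a subformula on the $d$ variables of $S_i$ with at most $\ell_i$ leaves, and a standard counting argument (tree shape, leaf labels from $O(d)$ literals/constants, and gate labels from a constant-size basis) bounds the number of distinct functions computable by such a subformula by $2^{O(\ell_i \log d)}$. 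Hence if $N_i$ denotes the number of distinct subfunctions on $S_i$ that arise as we vary the other variables over a chosen restriction family, then $\ell_i \geq \Omega(\log N_i / \log d)$, so $L(F) \geq \Omega\!\bigl(\sum_i \log N_i / \log d\bigr)$.

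The heart of the proof is to realize $\log N_i \geq \Omega(\min(n, 2^d/\sqrt d))$; by the symmetry of \ov{} in its vectors, it suffices to do this for $i=1$. The construction uses an antichain. Fix an antichain $A_1,\ldots,A_m$ of subsets of $[d]$, all containing a designated coordinate (say coordinate $1$), of size $m = \min(n-1,\binom{d-1}{\lfloor (d-1)/2\rfloor}) = \Theta(\min(n, 2^d/\sqrt d))$; such an antichain exists by Sperner's theorem applied to $[d]\setminus\{1\}$. For each $T\subseteq [m]$, define a restriction by setting $v_{j+1}=\mathbf 1_{A_j}$ if $j\in T$ and $v_{j+1}=\vec 1$ otherwise. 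Because every non-free vector contains coordinate $1$, no two of them are orthogonal, and the subfunction on $v_1$ is exactly
\[
f_T(v_1)\;=\;[v_1=\vec 0]\;\vee\;\bigvee_{j\in T}\;[v_1\perp \mathbf 1_{A_j}].
\]
These subfunctions are pairwise distinct: for each $j^*\in[m]$, the point $w_{j^*}:=\mathbf 1_{[d]\setminus A_{j^*}}$ is nonzero, and by the antichain property $w_{j^*}\perp \mathbf 1_{A_j}$ iff $A_j\subseteq A_{j^*}$ iff $j=j^*$, so $f_T(w_{j^*})=[j^*\in T]$. Hence $N_1\geq 2^m$.

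Combining the two steps gives
\[
L(F)\;\geq\;\Omega\!\left(\frac{n\cdot \min(n, 2^d/\sqrt d)}{\log d}\right)\;=\;\Omega\!\left(\min\!\Bigl(\frac{n^2}{\log d},\; \frac{n\cdot 2^d}{\sqrt d\,\log d}\Bigr)\right),
\]
as claimed. The main obstacle is the second step. The subfunctions on $v_1$ are intrinsically monotone in the negated input literals, so $N_1$ is at most the Dedekind number $2^{\Theta(2^d/\sqrt d)}$; generic restrictions either destroy this upper bound (by introducing orthogonal pairs that collapse the subfunction to the constant $1$) or sample from a vastly smaller family of monotone functions. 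The antichain-with-fixed-element construction is tailored to simultaneously avoid such collapses, witness distinctness of all $2^m$ subsets by the explicit complement-of-antichain points $w_{j^*}$, and saturate the Dedekind-number count whenever $n$ is large enough for that to be the binding constraint.
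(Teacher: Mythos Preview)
Your proof is correct and follows essentially the same strategy as the paper's: a Ne\v{c}iporuk-style argument that fixes all but one input vector and shows that the induced subfunction on the remaining $d$ variables can range over $2^{\Theta(\min(n,\,2^d/\sqrt d))}$ possibilities, forcing $\Omega(\min(n,2^d/\sqrt d)/\log d)$ leaves in that block. The packaging differs in two minor ways. First, you use the classical subfunction-counting formulation (summing $\log N_i/\log d$ over all $n$ blocks), whereas the paper picks the single vector $v_{i^\star}$ owning the fewest leaves and argues that, after restriction, the resulting $d$-input formula must compute a function that is \emph{hard} on the middle layer $\binom{[d]}{d/2}$ (chosen by the same counting over $2^{\min(n-1,\binom{d}{d/2})}$ behaviors); these two formulations are standard duals and yield the same bound here by the symmetry of \ov{} in its vectors. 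Second, your restriction family (indicators $\mathbf 1_{A_j}$ of an antichain all containing a fixed coordinate, padded with $\vec 1$) is a small variant of the paper's (bitwise complements $\overline{x'_i}$ of weight-$d/2$ strings, padded with $\vec 1$); your common-coordinate device makes the ``no orthogonal pair among the fixed vectors'' step immediate, while the paper's corresponding sentence is argued more loosely. Neither difference changes the substance of the argument.
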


\begin{theorem}[OVC For Branching Programs] \label{bp-lb} \ov{} on $n$ vectors in $d$ dimensions does not have branching programs of size $O(\min\{n^2,n \cdot 2^d/(d^{1/2})\}/(\log(nd)\log(d)))$, for all sufficiently large $n,d$.
\end{theorem}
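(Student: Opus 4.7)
The plan is to apply Nechiporuk's classical method for branching program lower bounds, using a carefully chosen family of correlated (non-independent) deterministic restrictions. Recall Nechiporuk's bound: for any partition of the $N = nd$ input bits into blocks $B_1, \ldots, B_k$, if $s_i$ counts the number of distinct subfunctions on $B_i$ obtained by ranging over all assignments to the bits outside $B_i$, then every branching program for $f$ has size $\Omega\bigl(\sum_i \log s_i / \log N\bigr)$.

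I would take the natural partition in which $B_i$ consists of the $d$ bits of vector $v_i$. The crux of the argument is a lower bound on $s_i$, obtained by exhibiting many distinct residual functions of $v_i$ as the other $n-1$ vectors are varied. Fix a coordinate $1 \in [d]$ and let $\mathcal{F}$ be the Sperner family consisting of all $\lfloor d/2 \rfloor$-subsets of $[d]$ that contain $1$; then $|\mathcal{F}| = \binom{d-1}{\lfloor d/2 \rfloor - 1} = \Theta(2^d/\sqrt{d})$. Any two members of $\mathcal{F}$ share the coordinate $1$, so restricting $\{v_j : j \neq i\}$ to values in $\mathcal{F}$ avoids creating any orthogonal pair among them, and the residual function collapses to $h_{\mathcal{A}}(v_i) = \bigvee_{w \in \mathcal{A}}[\,v_i \perp w\,]$, where $\mathcal{A}$ is the (multi)set of chosen values.

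The key lemma is that distinct subfamilies $\mathcal{A} \neq \mathcal{A}' \subseteq \mathcal{F}$ produce distinct residual functions. If $w \in \mathcal{A} \setminus \mathcal{A}'$, consider the test input $v_i := [d] \setminus w$: then $h_{\mathcal{A}}(v_i) = 1$ (witnessed by $w$), while for every $w' \in \mathcal{A}'$ the equality $|w'| = |w|$ combined with $w' \neq w$ forces $w' \not\subseteq w$, hence $w' \cap v_i \neq \emptyset$ and $h_{\mathcal{A}'}(v_i) = 0$. Any nonempty subfamily of $\mathcal{F}$ of size at most $n-1$ is a valid choice (padding by repeats if $|\mathcal{A}| < n-1$), which gives $s_i \geq \min\bigl(2^{|\mathcal{F}|},\, \binom{|\mathcal{F}|}{\min(n-1,\lfloor |\mathcal{F}|/2 \rfloor)}\bigr)$. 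A short binomial estimate then yields $\log s_i = \Omega\bigl(\min(nd,\, 2^d/\sqrt{d})\bigr)$ in every regime; summing over the $n$ blocks and applying Nechiporuk delivers a branching-program lower bound of $\Omega\bigl(\min(n^2 d,\, n\cdot 2^d/\sqrt{d})/\log(nd)\bigr)$, which implies Theorem~\ref{bp-lb}.

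The step I expect to be most delicate is obtaining the exact polylog denominator $\log(nd)\log(d)$ stated in the theorem: the direct form of Nechiporuk contributes only a single $\log(nd)$ factor, and the extra $\log d$ would come either from a sharper version of the method (with $\log\log s_i$ in the denominator) or from a careful separate treatment of the two regimes $nd \leq 2^d/\sqrt{d}$ and $nd > 2^d/\sqrt{d}$. I would also verify that the construction works uniformly across boundary values of $d$ and that the multiset padding does not degrade the subfunction count.
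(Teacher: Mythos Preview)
Your approach is essentially the paper's: both are Nechiporuk-style arguments that partition the input by vectors and use the middle layer of the Boolean cube to manufacture many distinct subfunctions on a single block. The paper argues on a single least-read vector and counts branching programs explicitly, whereas you invoke Nechiporuk's theorem and sum over all $n$ blocks; the two are equivalent up to constants.

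Two remarks. First, your intermediate claim $\log s_i = \Omega(\min(nd,\, 2^d/\sqrt d))$ is overstated: for example when $n \approx 2^d/d^{3/2}$ one only gets $\log\binom{|\mathcal F|}{n-1} = \Theta(n\log d)$, not $\Theta(nd)$. The bound that is immediate from your construction is $\log s_i = \Omega(\min(n,\, 2^d/\sqrt d))$, and this already yields $S = \Omega\bigl(\min(n^2,\, n\cdot 2^d/\sqrt d)/\log(nd)\bigr)$ via Nechiporuk. Second, your concern about matching the extra $\log d$ in the denominator is misplaced: the bound you actually obtain is \emph{stronger} than Theorem~\ref{bp-lb} by exactly that factor, so it implies the stated theorem directly. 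The paper's additional $\log d$ arises from its explicit count of size-$T$ branching programs on $d$ inputs (the $d^T$ node-label term), which the black-box Nechiporuk formulation absorbs into the single $\log N$ denominator. Your device of forcing every member of $\mathcal F$ to contain a fixed coordinate is a clean way to guarantee pairwise non-orthogonality among the substituted vectors.
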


As far as we know, size-$s$ formulas of constant fan-in may be more powerful than size-$s$ branching programs (but note that DeMorgan formulas can be efficiently simulated with branching programs). Thus the two lower bounds are incomparable. These lower bounds are tight up to the (negligible) factor of $\min\{\sqrt{\log n},d^{1/2}\}\log(d)\log(nd)$, as the following simple construction shows: 

\begin{proposition}\label{formula-ub} \ov{} has $\AC^0$ formulas (and branching programs) of size $O(dn \cdot \min(n,2^d))$.
\end{proposition}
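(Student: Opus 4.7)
The plan is to give two separate constructions---one of size $O(n^2 d)$ and one of size $O(n \cdot 2^d \cdot d)$---and keep the smaller, so that the bound is $O(n d \cdot \min(n, 2^d))$ in either regime.

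For the $O(n^2 d)$ construction, I would write the textbook pairwise-check formula
$$\bigvee_{1 \le i < j \le n} \bigwedge_{k=1}^d \neg(v_i[k] \wedge v_j[k]),$$
which is a depth-$3$ $\AC^0$ formula of size $O(n^2 d)$. A branching program of the same size is obtained by sweeping through the $\binom{n}{2}$ pairs $(i,j)$ and the $d$ coordinates $k$, carrying only a one-bit ``still orthogonal'' flag.

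For the $O(n \cdot 2^d \cdot d)$ construction, the plan is to use the $2^d$ possible vector values as witnesses. For each $u \in \{0,1\}^d$ define
$$P_1(u) \;:=\; \bigvee_{i=1}^n \bigwedge_{k=1}^d [v_i[k] = u[k]], \qquad P_2(u) \;:=\; \bigvee_{j=1}^n \bigwedge_{k:\,u[k]=1} \neg v_j[k],$$
so that $P_1(u)$ tests whether some input vector equals $u$ and $P_2(u)$ tests whether some input vector is disjoint from $u$. For each fixed $u$ both are $O(nd)$-size formulas (an OR over $n$ of an AND of at most $d$ literals), so
$$\ov \;=\; \bigvee_{u \in \{0,1\}^d} \bigl(P_1(u) \wedge P_2(u)\bigr)$$
is an $\AC^0$ formula of size $O(nd \cdot 2^d)$. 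The branching-program analog is an outer chain over the $2^d$ values of $u$, wrapping, for each $u$, an $O(nd)$-size sub-program that makes two passes through the input---one to evaluate $P_1(u)$ and one to evaluate $P_2(u)$---and halts with output $1$ as soon as both pass.

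The main thing I would want to check carefully, rather than calculation, is correctness with the $i \ne j$ requirement. When $u \ne 0^d$, any witness $v_i = u$ is certainly not disjoint from $u$, so any $v_j$ witnessing $P_2(u)$ must have $j \ne i$, yielding a genuine orthogonal pair. When $u = 0^d$, $P_2(u)$ is vacuous and $P_1(u)$ fires precisely when some $v_i$ is the all-zeros vector, which (since $n \ge 2$) is orthogonal to any other input vector. The one tempting-but-wrong variant to avoid is enumerating all disjoint pairs $(u, u')$ of values directly, which would cost $3^d$ outer terms; choosing one side of the pair by the enumeration and the other by a syntactic disjointness check against $u$ is what keeps the size at $O(n d \cdot 2^d)$.
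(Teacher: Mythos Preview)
Your proposal is correct and follows essentially the same approach as the paper: the $O(n^2 d)$ pairwise formula and the $O(n\cdot 2^d \cdot d)$ ``guess a witness value $u$, check $P_1(u)\wedge P_2(u)$'' construction are exactly what the paper does. You are in fact slightly more careful than the paper, which neither addresses the $i\neq j$ issue in the $u=0^d$ case nor spells out the branching-program versions (it simply appeals to the generic DeMorgan-formula-to-branching-program simulation).
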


\begin{proof}
The $O(dn^2)$ bound is obvious: take an OR over all $\binom{n}{2}$ pairs of vectors, and use an $\AND \circ \OR$ of $O(d)$ size to determine orthogonality of the pair. For the $O(dn 2^d)$ bound, our strategy is to try all $2^d$ vectors $v$, and look for a $v$ that is equal to one input vector and is orthogonal to another input vector. To this end, take an OR over all $2^d$ possible vectors $w$ over $[d]$, and take the AND of two conditions:
\begin{compactenum}
\item There is a vector $v$ in the input such that $v=w$. This can be computed with an an OR over all $n$ vectors of an $O(d)$-size formula, in $O(n d)$ size.
\item There is a vector $u$ in the input such that $\langle u,w\rangle = 0$. This can be computed with a \emph{parallel} OR over all $n$ vectors of an $O(d)$-size formula, in $O(n d)$ size.
\end{compactenum} Note that the above formulas have constant-depth, with unbounded fan-in AND and OR gates. Since DeMorgan formulas of size $s$ can be simulated by branching programs of size $O(s)$, the proof is complete.\footnote{This should be folklore, but we couldn't find a reference; see the Appendix~\ref{deMorgan-bp}.}
\end{proof}

\paragraph{Formulas with symmetric gates.} As mentioned above, \ov{} can be naturally expressed as a depth-three formula of unbounded fan-in: an $\AC^0_3$ formula of $O(n^2 d)$ wires. We show that this wire bound is also nearly optimal, even when we allow arbitrary symmetric Boolean functions as gates. Note this circuit model subsumes both $\AC$ (made up of \AND, \OR, and \NOT{} gates) and $\TC$ (made up of {\sf MAJORITY} and {\sf NOT} gates).

\begin{theorem} \label{symmetric-lb} Every formula computing \ov{} composed of arbitrary symmetric functions with unbounded fan-in needs at least $\Omega(\min\{n^2/(\log d),n\cdot 2^d/(d^{1/2}\log d)\})$ wires, for all $n$ and $d$.
\end{theorem}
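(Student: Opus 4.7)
The plan is to reduce the symmetric-gate wire bound to the bounded-fan-in size bound of Theorem~\ref{formula-lb} via a random restriction that simplifies every large-fan-in symmetric gate to a constant. Let $F$ be a symmetric-gate formula computing \ov{} with $w$ wires. I want to produce, after a suitable restriction, a bounded-fan-in formula of size $\widetilde{O}(w)$ that still computes a hard \ov{} sub-instance, so that Theorem~\ref{formula-lb} can be invoked as a black box.

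First, I would apply a structured random restriction $\rho$ of the kind used in the proof of Theorem~\ref{formula-lb}, chosen to leave behind a hard \ov{} sub-instance of parameters $(n',d')$ while fixing the remaining input bits in a correlated (non-independent) way. Each input wire of $F$ becomes either \emph{live} (reading a surviving variable) or a constant under $\rho$. For a symmetric gate $g$ of fan-in $k$ computing some $h_g\colon\{0,\ldots,k\}\to\{0,1\}$, write its output as $h_g(W_g+L_g)$, where $W_g$ is the $\rho$-dependent Hamming weight of the fixed inputs and $L_g$ is the Hamming weight of the live inputs.

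Next, I would show via concentration that for any $g$ with fan-in $k$ above a threshold $k_0$, the random variable $W_g$ concentrates in a window of width $O(\sqrt{k})$ around its mean under $\rho$. Unless the (few) transition points of $h_g$ happen to lie inside this window, $g$ collapses to a constant under $\rho$; the probability that this fails is roughly $O(\ell_g/\sqrt{k})$, where $\ell_g$ is the number of live input wires. Summing these failure probabilities via a union bound, and exploiting the structured properties of $\rho$, I would argue that with positive probability over $\rho$, every gate of fan-in $\geq k_0$ becomes a constant. The surviving formula has only gates of fan-in $<k_0$; each is then replaced by a bounded-fan-in DeMorgan formula of size $\mathrm{poly}(k_0)$. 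Choosing $k_0$ as small as the concentration allows (ideally a constant, or a very slowly growing function of $nd$) yields a bounded-fan-in formula of size $\widetilde{O}(w)$ computing \ov{} on the $(n',d')$ sub-instance; Theorem~\ref{formula-lb} then lower-bounds this size and hence $w$ by the target quantity.

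\textbf{Main obstacle.} The crux is controlling the concentration of $W_g$ when the ``constant'' inputs of $g$ are outputs of \emph{subformulas} of $F$ rather than literal variables: such outputs are correlated functions of $\rho$, and ordinary Chernoff-style concentration does not apply directly. Making this step work demands either an inductive bottom-up argument showing that each subformula output behaves near-independently under $\rho$, or a restriction $\rho$ engineered so that, block by block, the outputs of distinct subformulas decorrelate. The authors' novel non-independent input restriction (highlighted in the abstract as ``different from typical random restrictions where variables are assigned independently'') is almost certainly what enables this concentration argument, and matching the simulation blow-up to the very tight $\log d$ factor in the target bound is the other quantitative hurdle to overcome.
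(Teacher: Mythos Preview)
Your proposal takes a genuinely different route from the paper, and the gap you yourself flag is real and unaddressed.

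The paper never converts symmetric gates to bounded fan-in, and the restriction it uses is \emph{deterministic}, not probabilistic. The argument runs exactly parallel to Theorem~\ref{formula-lb}, with two observations specific to symmetric-gate formulas:
\begin{enumerate}
\item There are only $d^{O(w)}$ symmetric-gate formulas with $w$ wires on $d$ inputs (Lemma~\ref{sym-size} in the paper). This is the same counting bound, up to the constant in the exponent, as for fan-in-$c$ formulas.
\item A symmetric function with some inputs fixed to $0/1$ is still a symmetric function on the remaining inputs, with correspondingly fewer wires. Hence the class of symmetric-gate formulas is closed under partial assignment.
\end{enumerate}
Given these, the proof is the Nechiporuk template verbatim: by pigeonhole some input vector $v_{i^\star}$ owns at most $w_i/n$ of the input-touching wires (and $w\le 2w_i$ since the tree has at least as many leaves as internal edges). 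Deterministically assign the other $n-1$ vectors to the complements of middle-slice strings encoding a hard function, exactly as in Theorem~\ref{formula-lb}. By closure under restriction, the resulting object is again a symmetric-gate formula, now with $O(w/n)$ wires on $d$ inputs. Since the $n-1$ assigned vectors can realize $2^{\min\{n-1,\binom{d}{d/2}\}}$ distinct middle-slice behaviors while only $d^{O(w/n)}$ symmetric-gate formulas exist on $d$ inputs with $O(w/n)$ wires, one gets $d^{O(w/n)}\ge 2^{\min\{n-1,\binom{d}{d/2}\}}$ and hence the claimed bound.

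Compared with this, your plan has two problems. First, the ``non-independent restriction'' in the abstract is not a random process at all in the proofs: it is the deterministic assignment of $n-1$ whole vectors chosen to encode a worst-case hard function; there is no distribution over which $W_g$ could concentrate. Second, even if you introduced randomness, the obstacle you identify---that the inputs to an internal symmetric gate are correlated subformula outputs, so Chernoff-type concentration on $W_g$ is unjustified---is genuine and is precisely why the paper avoids this route. The closure observation (a restricted symmetric gate is still symmetric) together with the counting lemma removes any need to collapse large gates or to invoke Theorem~\ref{formula-lb} as a black box.
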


\subsection{Lower Bounds for Batch Partial Match, Batch Subset Query, Batch Hamming Nearest Neighbors, etc.} A primary reason for studying \ov{} is its ubiquity as a ``bottleneck'' special case of many other basic search problems. In particular, many problems have very succinct reductions from \ov{} to them, and our lower bounds extend to these problems. 

We say that a \emph{linear projection reduction} from a problem $A$ to problem $B$ is a circuit family $\{C_n\}$ where each $C_n$ has $n$ input and $O(n)$ outputs, each output of $C_n$ depends on at most one input, and $x \in A$ if and only if $C_{|x|}(x) \in B$, for all possible inputs $x$. Under this constrained reduction notion, it is easy to see that if $\ov{}$ has a linear projection reduction to $B$, then size lower bounds for $\ov{}$ (even in our restricted settings) imply analogous lower bounds for $B$ as well. Via simple linear projection reductions which preserve both $n$ and $d$ (up to constant multiplicative factors), analogous lower bounds hold for many other problems which have been commonly studied, such as:

\begin{framed} 
{\sc Batch Partial Match} \\
Given: $n$ ``database'' vectors $v_1,\ldots,v_n \in \{0,1\}^d$ and $n$ queries $q_1,\ldots,q_n \in \{0,1,\star\}^d$ \\
Decide: Are there $i,j$ such that $v_i$ is a partial match of $q_j$, i.e. for all $k$, $q_j[k] \in \{v_i[k],\star\}$?
\end{framed}

\begin{framed} 
{\sc Batch Subset Query} \\
Given: $n$ sets $S_1,\ldots,S_n \subseteq [d]$ and $n$ queries $T_1,\ldots,T_n \subseteq [d]$ \\
Decide: Are there $i,j$ such that $S_i \subseteq T_j$?
\end{framed}

\begin{framed} 
{\sc Batch Hamming Nearest Neighbors} \\
Given: $n$ points $p_1,\ldots,p_n \in \{0,1\}^d$ and $n$ queries $q_1,\ldots,q_n \in \{0,1\}^d$, integer $k$\\
Decide: Are there $i,j$ such that $p_i$ and $q_j$ differ in at most $k$ positions?
\end{framed}
 
\subsection{``Average-Case'' OVC is False, Even for AC0}

The method of proof in the above lower bounds is an input restriction method that does \emph{not} assign variables independently (to $0$, $1$, or $\star$) at random. (Our restriction method could be viewed as a random process, just not one that assigns variables independently.) Does \ov{} become easier under natural product distributions of instances, e.g., with each bit of each vector being an independent random variable? Somewhat surprisingly, we show that a reasonable parameterization of average-case OVC is false, even for $\AC^0$ formulas. 

For $p \in (0,1)$, and for a given $n$ and $d$, we call $\ov{}(p)_{n,d}$ the distribution of $\ov{}$ instances where all bits of the $n$ vectors are chosen independently, set to $1$ with probability $p$ and $0$ otherwise. We would like to understand when $\ov{}(p)$ can be efficiently solved on almost all instances (i.e., with probability $1-o(1)$). We give formulas of truly sub-quadratic size for every $p > 0$: 

\begin{theorem}\label{avg-case-bp} For every $p \in (0,1)$, and every $n$ and $d$, there is an $\AC^0$ formula of size $n^{2-\eps_p}$ that correctly answers all but a $o_n(1)$ fraction of $\ov{}(p)_{n,d}$ instances on $n$ vectors and $d$ dimensions, for an $\eps_p > 0$ such that $\eps_p \rightarrow 1$ as $p \rightarrow 1$.
\end{theorem}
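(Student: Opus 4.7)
Write $\lambda = \log_2(1/(1-p^2))$ and $d^{\ast} = 2\log_2 n / \lambda$, so that the expected number $E[X] = \binom{n}{2}(1-p^2)^d$ of orthogonal pairs is $\Theta(1)$ at $d = d^{\ast}$. The plan is a case analysis on $d$: far above $d^{\ast}$ output \textsf{NO}, far below it output \textsf{YES} (justified by a second-moment bound), and on the narrow middle window use the brute-force $O(n\cdot 2^d\cdot d)$-size $\AC^0$ formula from Proposition~\ref{formula-ub}.

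For $d \geq d^{\ast} + \omega(1)/\lambda$ the Markov bound gives $\Pr[X>0] \leq E[X] = o(1)$, so the hardcoded \textsf{NO} formula is correct with probability $1-o_n(1)$. For the second-moment side I would compute $E[X_{ij}X_{ik}] = (1-2p^2+p^3)^d$ by evaluating, per coordinate $\ell$, the ``doubly-good'' event $\neg(v_i[\ell]\wedge v_j[\ell]) \wedge \neg(v_i[\ell]\wedge v_k[\ell])$, which has probability $1 - p(1-(1-p)^2) = 1-2p^2+p^3$. A standard expansion of $\text{Var}(X)$ then yields
\[
  \frac{\text{Var}(X)}{E[X]^2} \;\leq\; \frac{2}{E[X]} \;+\; \frac{4}{n} \cdot r^d, \qquad r \;:=\; \frac{1-2p^2+p^3}{(1-p^2)^2},
\]
and Chebyshev concludes $\Pr[X=0] = o(1)$ whenever $E[X]\to \infty$ and $r^d = o(n)$. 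A direct calculation shows $r^2(1-p^2) \leq 1$ holds for $p$ up to some $p_0 \approx 0.7$, in which case both conditions are satisfied throughout $d \leq d^{\ast} - \omega(1)/\lambda$ and the hardcoded \textsf{YES} formula works.

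For the residual middle window $d \in [d^{\ast}-g(n),\, d^{\ast}+g(n)]$ of width $O(\log\log n / \lambda)$, invoke the formula of Proposition~\ref{formula-ub}. Since $2^d \leq n^{2/\lambda + o(1)}$ on this window, the formula has size $n^{1 + 2/\lambda + o(1)}$, which is $n^{2-\eps_p}$ with $\eps_p = 1 - 2/\lambda - o(1)$. This is positive once $\lambda > 2$, equivalently $p > \sqrt{3}/2$, and as $p \to 1$ we have $\lambda \to \infty$ and $\eps_p \to 1$, giving the limiting claim.

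The main obstacle will be the thin intermediate band $p \in (p_0, \sqrt{3}/2)$ where neither the brute-force formula alone (not subquadratic on the whole middle window) nor the second moment alone (which fails near $d^{\ast}$ because $r^d/n \to \infty$) covers the full range. There the two pieces must be stitched together: use the second-moment ``\textsf{YES}'' up to $d \leq \log_2 n / \log_2 r$ and the brute-force $O(n\cdot 2^d \cdot d)$ formula on the remaining narrow range of $d$-values, verifying that the overall size still beats $n^2$ by a $p$-dependent polynomial factor.
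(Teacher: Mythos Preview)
Your approach handles only $p>\sqrt{3}/2$; for every $p\le\sqrt{3}/2$ it fails to produce a subquadratic formula, so the theorem as stated (which demands $\eps_p>0$ for \emph{all} $p\in(0,1)$) is not proved. The problem is the middle window. At $d\approx d^{\ast}=2\log_2 n/\lambda$ the brute-force $O(n\cdot 2^d\cdot d)$ formula has size $n^{1+2/\lambda+o(1)}$, and $1+2/\lambda\ge 2$ exactly when $p\le\sqrt{3}/2$. Your ``stitching'' for $p\in(p_0,\sqrt{3}/2)$ does not escape this: pushing the second-moment \textsf{YES} region up to $d_1=\log_2 n/\log_2 r$ still leaves the range $[d_1,d^{\ast}]$ to brute force, and that range includes $d\approx d^{\ast}$ where the formula is already superquadratic (for instance at $p=0.8$ one gets $\lambda\approx 1.47$ and size $\approx n^{2.36}$). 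And for $p\le p_0$ you do not even propose a fix; you use brute force on the whole middle window, which is again $n^{1+2/\lambda}\ge n^2$.

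What is missing is a genuinely new idea for the nontrivial window when $p$ is small. The paper supplies one: conditioned on $u,v$ being orthogonal, each of $u,v$ has only about $pd/(1+p)$ ones rather than the typical $pd$, so orthogonal pairs are (whp) ``light.'' A Chernoff bound shows a random vector is light with probability at most $n^{-\alpha}$ for some $\alpha=\alpha(p)>0$. One then partitions the $n$ vectors into $n^{1-\alpha(1-\eps)}$ groups of size $n^{\alpha(1-\eps)}$; whp each group contains at most one light vector, so the componentwise $\OR$ of the light vectors in a group simply \emph{is} that light vector. Testing orthogonality of these ORed vectors over all $n^{2-2\alpha(1-\eps)}$ group pairs is an $\AC^0$ formula of size $n^{2-\alpha(1-\eps)+o(1)}$, and it has one-sided error even in the worst case (a reported ``yes'' is always correct). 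This gives $\eps_p>0$ for every $p\in(0,1)$; your phase-transition and brute-force pieces are exactly what the paper uses as well, but only to get the $\eps_p\to 1$ limit for large $p$.
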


Interestingly, our $\AC^0$ formulas have one-sided error, even in the worst case: if there is no orthogonal pair in the instance, our formulas \emph{always} output $0$. However, they may falsely report that there is no orthogonal pair, but this only occurs with probability $o(1)$ on a random $\ov{}(p)_{n,d}$ instance, for any $n$ and $d$.

\subsection{Intuition}

Our lower bounds give some insight into what makes \ov{} hard to solve. There are two main ideas:
\begin{compactenum}
\item \ov{} instances with $n$ $d$-dimensional vectors can encode difficult Boolean functions on $d$ inuts, requiring circuits of size $\tilde{\Omega}(\min(2^d,n))$. This can be accomplished by encoding those strings with ``middle'' Hamming weight from the truth table of a hard function with the vectors in an \ov{} instance, in such a way that finding an orthogonal pair is equivalent to evaluating the hard Boolean function at a given $d$-bit input. This is an inherent property of \ov{} that is independent of the computational model.
\item
Because we are working with simple computational models, we can generally make the following kind of claim: given an algorithm for solving \ov{} and given a partial assignment to all input vectors except for one appropriately chosen vector, we can propagate this partial assignment through the algorithm, and ``shrink'' the size of the algorithm by a factor of $\Omega(n)$. This sort of argument was first used by Nechiporuk~\cite{N66} in the context of branching program lower bounds, and can be also applied to formulas.
\end{compactenum}

Combining the two ideas, if we can ``shrink'' our algorithm by a factor of $n$ by restricting the inputs appropriately, and argue that the remaining subfunction requires circuits of size $\tilde{\Omega}(\min(2^d,n))$, we can conclude that the original algorithm for \ov{} must have had size $\tilde{\Omega}(\min(n 2^d,n^2))$. (Of course, there are many details to verify, but this is the basic idea.)

The small $\AC^0$ formulas for $\ov{}(p)$ (the average-case setting) involve several ideas. First, given the probability $p \in (0,1)$ of $1$ and the number of vectors $n$, we observe a simple phase transition phenomenon: there is only a particular range of dimensionality $d$ in which the problem is non-trivial, and outside of this range, almost all instances are either ``yes'' instances or ``no'' instances. Second, within this ``hard'' range of $d$, the orthogonal vector pairs are expected to have a special property: with high probability, at least one orthogonal pair in a ``yes'' instance has noticeably fewer ones than a typical vector in the distribution. To obtain a sub-quadratic size $\AC^0$ formula from these observations, we partition the instance into small groups such that the orthogonal pair (if it exists) is the only ``sparse'' vector in its group, whp. Over all pairs of groups $i,j$ in parallel, we take the component-wise OR of all sparse vectors in group $i$, and similarly for group $j$. Then we test the two ORed vectors for orthgonality. By doing so, if our formula ever reports $1$, then there is some orthogonal pair in the instance (even in the worst case).

\section{Lower Bounds}

\paragraph{Functions that are hard on the middle layer of the hypercube.} In our lower bound proofs, we will use functions on $d$-inputs for which every small circuit fails to agree with the function on inputs of Hamming weight about $d/2$.  Let $\binom{[d]}{k}$ denote the set of all $d$-bit vectors of Hamming weight $k$. 

\begin{lemma}\label{hard-inputs} Let $d$ be even, let ${\cal C}$ be a set of Boolean functions, let $N(d,s)$ be the number of functions in ${\cal C}$ on $d$ inputs of size at most $s$, and let $s^{\star} \in \N$ satisfy $\log_2(N(d,s^{\star})) < \binom{d}{d/2}$. 

Then there is a sequence of $\binom{d}{d/2}$ pairs $(x_i,y_i) \in \binom{[d]}{d/2} \times \{0,1\}$, such that every function $f : \{0,1\}^d \rightarrow \{0,1\}$ satisfying $f(x_i) = y_i$ (for all $i=1,\ldots,\binom{d}{d/2}$) requires ${\cal C}$-size at least $s^{\star}$.
\end{lemma}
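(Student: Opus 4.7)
The plan is a direct counting / pigeonhole argument on the middle Hamming slice. Consider the restriction map $\rho$ sending each function $g \in {\cal C}$ of size at most $s^{\star}$ to its truth table on the middle layer, i.e., to the function $\rho(g) : \binom{[d]}{d/2} \to \{0,1\}$ defined by $\rho(g)(x) = g(x)$. The image of $\rho$ is a subset of $\{0,1\}^{\binom{[d]}{d/2}}$, and its size is at most $N(d,s^{\star})$, since each circuit contributes at most one restriction.

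By hypothesis, $\log_2(N(d,s^{\star})) < \binom{d}{d/2}$, so $N(d,s^{\star}) < 2^{\binom{d}{d/2}}$, which is exactly the total number of $\{0,1\}$-assignments on the middle layer. Hence there exists some assignment $h : \binom{[d]}{d/2} \to \{0,1\}$ not in the image of $\rho$, i.e., no size-$\leq s^{\star}$ circuit in ${\cal C}$ restricts to $h$ on the middle layer.

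To finish, fix an arbitrary enumeration $x_1, \ldots, x_{\binom{d}{d/2}}$ of $\binom{[d]}{d/2}$, and set $y_i := h(x_i)$. Any $f : \{0,1\}^d \to \{0,1\}$ satisfying $f(x_i) = y_i$ for every $i$ has the property that $\rho(f) = h$, which is not attained by any size-$\leq s^{\star}$ circuit in ${\cal C}$. Therefore $f$ requires ${\cal C}$-size strictly greater than $s^{\star}$, and in particular at least $s^{\star}$, as claimed.

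There is no real obstacle here — the statement is set up precisely so that the counting argument goes through. The only care point is making sure the "size at most $s$" convention in the definition of $N(d,s)$ matches the conclusion, which it does: we rule out every circuit of size $\leq s^{\star}$, which is stronger than (and implies) the stated "at least $s^{\star}$" lower bound.
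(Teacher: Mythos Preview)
Your proof is correct and is essentially the same counting/pigeonhole argument the paper gives: since $N(d,s^{\star}) < 2^{\binom{d}{d/2}}$, some truth table on the middle slice is not realized by any size-$\leq s^{\star}$ function in ${\cal C}$, and that truth table furnishes the required pairs. Your formulation via the restriction map $\rho$ is a slightly more explicit packaging of the same idea, and your closing remark that the argument actually yields ``size $> s^{\star}$'' (a hair stronger than the stated ``at least $s^{\star}$'') is accurate.
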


\begin{proof} By definition, there are $N(d,s)$ functions of size $s$ on $d$ inputs from ${\cal C}$, and there are $2^{\binom{d}{d/2}}$ input/output sequences $(x_i,y_i)\binom{[d]}{d/2} \times \{0,1\}$ defined over all $d$-bit vectors of Hamming weight $d/2$. For $2^{\binom{d}{d/2}} > N(d,s)$, there is an input/output sequence that is not satisfied by any function in ${\cal C}$ of size $s$. 
\end{proof}

Note that it does not matter \emph{what} is meant by size in the above lemma: it could be gates, wires, etc., and the lemma still holds (as it is just counting). The above simple lemma applies to formulas, as follows:

\begin{corollary}\label{hard-middle-formula} Let $c \geq 2$ be a constant. There are $\binom{d}{d/2}$ pairs $(x_i,y_i) \in \binom{[d]}{d/2} \times \{0,1\}$, such that every function $f : \{0,1\}^d \rightarrow \{0,1\}$ satisfying $f(x_i) = y_i$ (for all $i=1,\ldots,\binom{d}{d/2}$) needs $c$-fan-in formulas of size at least $\Omega(2^d/(d^{1/2}\log d))$.
\end{corollary}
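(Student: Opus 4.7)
The plan is to apply Lemma \ref{hard-inputs} with $\mathcal{C}$ being the class of $c$-fan-in formulas, measuring size by the number of gates (or equivalently, leaves, up to a factor of $c$). All we need is a good counting bound on $N(d,s)$, the number of such formulas of size at most $s$ on $d$ inputs, together with Stirling's estimate $\binom{d}{d/2} = \Theta(2^d/\sqrt{d})$.

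First I would bound $N(d,s)$. A $c$-fan-in formula of size $s$ is a rooted tree with $s$ nodes in which each internal gate has at most $c$ children; the number of unlabeled tree shapes is $2^{O(s)}$ since branching is bounded by $c = O(1)$. Each of the $s$ nodes is then labeled either by one of the $\le 2^{2^c} = O(1)$ gate types or, at a leaf, by one of the $d$ input variables (optionally negated). This gives
\[
N(d,s) \;\le\; 2^{O(s)} \cdot (2d)^{O(s)} \;=\; 2^{O(s\log d)},
\]
so $\log_2 N(d,s) \le K s \log d$ for an absolute constant $K = K(c)$.

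Next I would pick $s^\star$. By Stirling, $\binom{d}{d/2} \ge c_0 \cdot 2^d/\sqrt{d}$ for some absolute constant $c_0 > 0$ and all sufficiently large $d$. Setting
\[
s^\star \;=\; \left\lfloor \frac{c_0}{2K} \cdot \frac{2^d}{\sqrt{d}\,\log d}\right\rfloor \;=\; \Omega\!\left(\frac{2^d}{\sqrt{d}\,\log d}\right),
\]
one verifies $\log_2 N(d,s^\star) \le K s^\star \log d \le (c_0/2) \cdot 2^d/\sqrt{d} < \binom{d}{d/2}$, as required by the hypothesis of Lemma \ref{hard-inputs}.

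Finally I would invoke Lemma \ref{hard-inputs}: there is a sequence of $\binom{d}{d/2}$ pairs $(x_i,y_i) \in \binom{[d]}{d/2} \times \{0,1\}$ such that any function $f$ agreeing with all these assignments cannot be computed by a $c$-fan-in formula of size less than $s^\star = \Omega(2^d/(d^{1/2}\log d))$, which is the claimed bound. The only real work is the counting step, and the main (mild) obstacle is getting the formula count tight enough: a naive bound of $2^{O(s \log s)}$ would not suffice, but exploiting the fact that branching is bounded and that each internal node has $O(1)$ gate-type choices yields the needed $2^{O(s \log d)}$ bound, which after balancing against $\binom{d}{d/2}$ gives exactly the stated $\log d$ loss in the lower bound.
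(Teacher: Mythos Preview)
Your proposal is correct and follows essentially the same route as the paper: bound the number of $c$-fan-in formulas on $d$ inputs by $N(d,s)\le d^{O_c(s)}=2^{O(s\log d)}$, compare $\log_2 N(d,s)$ to $\binom{d}{d/2}=\Theta(2^d/\sqrt d)$, and invoke Lemma~\ref{hard-inputs}. Your argument is in fact more detailed than the paper's, which simply asserts $N(d,s)\le d^{k_c s}$ without spelling out the tree-shape and labeling count.
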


\begin{proof}
There are $N(d,s) \leq d^{k_c \cdot s}$ formulas of size $s$ on $d$ inputs, where the constant $k_c$ depends only on $c$. When $2^{\binom{d}{d/2}} > d^{k_c \cdot s}$, Lemma~\ref{hard-inputs} says that there is an input/output sequence of length $\binom{d}{d/2}$ that \emph{no} formula of size $s$ can satisfy. Thus to satisfy that sequence, we need a formula of size $s$ at least large enough that $2^{\binom{d}{d/2}} \leq d^{k_c \cdot s}$, i.e., $s \geq\Omega\left(\binom{d}{d/2}/\log(d)\right) \geq \Omega(2^d/(d^{1/2}\log d))$.
\end{proof}

\subsection{Lower Bound for Constant Fan-in Formulas} We are now ready to prove the lower bound for Boolean formulas of constant fan-in:

\begin{reminder}{Theorem~\ref{formula-lb}} For every constant $c \geq 1$, \ov{} on $n$ vectors in $d$ dimensions does not have $c$-fan-in formulas of size $O(\min\{n^2/(\log d),n \cdot 2^d/(d^{1/2}\log d)\})$, for all sufficiently large $n,d$.
\end{reminder}

All of the lower bound proofs have a similar structure. We will give considerably more detail in the proof of Theorem~\ref{formula-lb} to aid the exposition of the later lower bounds.

\begin{proof} To simplify the calculations, assume $d$ is even in the following. Let $F_{n,d}(v_1,\ldots,v_n)$ be a $c$-fan-in formula of minimal size $s$ computing $\ov{}$ on $n$ vectors of dimension $d$, where each $v_i$ denotes a sequence of $d$ Boolean variables $(v_{i,1},\ldots,v_{i,d})$. 

Let $\ell$ be the number of leaves of $F_{n,d}$. Since $F_{n,d}$ is minimal, each gate has fan-in at least two (gates of fan-in $1$ can be ``merged'' into adjacent gates). Therefore (by an easy induction on $s$) we have
\begin{align}\label{leaves-size}
s \geq \ell \geq s/2.
\end{align}
Observe there must be a vector $v_{i^{\star}}$ (for some $i^{\star} \in [n]$) whose $d$ Boolean variables appear on at most $\ell/n$ leaves of the formula $F_{n,d}$.  

{\bf Case 1.} Suppose $\binom{d}{d/2} \leq n-1$. Let $\{(x_i,y_i)\} \subseteq \binom{[d]}{d/2} \times \{0,1\}$ be a set of hard pairs from Corollary~\ref{hard-middle-formula}, and let $f : \{0,1\}^d \rightarrow \{0,1\}$ be any function that satisfies $f(x_i) = y_i$, for all $i$. Let $\{x'_1,\ldots,x'_t\} \subseteq \binom{[d]}{d/2}$ be those $d$-bit strings of Hamming weight $d/2$ such that $f(x'_i) = 1$, for some $t \leq \binom{d}{d/2} \leq n-1$. By Corollary~\ref{hard-middle-formula}, such an $f$ needs $c$-fan-in formulas of size at least $\Omega(2^d/(d^{1/2}\log d))$.

{\bf Case 2.} Suppose $\binom{d}{d/2} \geq n-1$. Then we claim there is a list of input/output pairs $(x_1,y_1)$, $\ldots$, $(x_{n-1},y_{n-1}) \in \binom{[d]}{d/2} \times \{0,1\}$ such that for every $f : \{0,1\}^d \rightarrow \{0,1\}$ satisfying $f(x_i) = y_i$, for all $i$, $f$ needs formulas of size at least $\Omega(n/\log d)$. To see this, simply note that if we take $n-1$ distinct strings $x_1,\ldots,x_{n-1}$ from $\binom{d}{d/2}$, there are $2^{n-1}$ possible choices for the list of pairs. So when $2^{n-1} > d^{k_c \cdot s}$, there is a list $(x_1,y_1)$, $\ldots$, $(x_{n-1},y_{n-1})$ that no formula of size $s$ satisfies. For any function $f : \{0,1\}^d \rightarrow \{0,1\}$ such that $f(x_i)=y_i$ for all $i=1,\ldots,n-1$, its formula size $s \geq \Omega(n/\log d)$ in this case. Let $\{x'_1,\ldots,x'_t\} \subseteq \binom{[d]}{d/2}$ be those $d$-bit strings of Hamming weight $d/2$ such that $(x'_i,1)$ is on the list, for some $t \leq n-1$. 

For either of the two cases, we will use the list of $t \leq \min\{n-1,\binom{d}{d/2}\}$ strings $\{x'_1,\ldots,x'_t\}$ to make assignments to the variables $v_i$ of our \ov{} formula, for all $i \neq i^{\star}$. In particular, for all $i=1,\ldots,t$ with $i \neq i^{\star}$, we substitute the $d$ bits of $\overline{x'_i}$ (the complement of $x_i$, obtained by flipping all the bits of $x'_i$) in place of the $d$-bit input vector $v_i$. If $t < n-1$ (which can happen in case 1), substitute all other $\vec{z_j}$ with $j \neq i^{\star}$ with $\vec{1}$. Note that all of the pairs of vectors substituted so far are not orthogonal to each other: for all $i \neq i'$, we have $\langle x'_i,x'_{i'}\rangle \neq 0$, because both $x'_i$ and $x'_{i'}$ are distinct vectors each with $d/2$ ones, and for all $i$ we have $\langle x'_i,\vec{1}\rangle \neq 0$.

After these substitutions, the remaining formula $F'_n$ is on only $d$ inputs, namely the vector $v_{i^{\star}}$. Moreover, $F'_n$ is a formula with at most $\ell/n$ leaves labeled by literals: the rest of the leaves are labeled with 0/1 constants. After simplifying the formula (replacing all gates with some 0/1 inputs by equivalent functions of smaller fan-in, and replacing gates of fan-in $1$ by wires), the total number of leaves of $F'_n$ is now at most $\ell/n$. Therefore by \eqref{leaves-size} we infer that 
\begin{align}\label{ub-Fprime}
\text{size}(F'_n) &\leq 2\ell/n.
\end{align} 
Since $F_{n,d}$ computes \ov{}, it follows that for every input vector $y \in \{0,1\}^d$ of Hamming weight $d/2$, $F'_n$ on input $y$ outputs $1$ if and only if there is some $i$ such that $\langle \overline{x'_i},y \rangle = 0$. Note that since both $\overline{x'_i}$ and $y$ have Hamming weight exactly $d/2$, we have $\langle \overline{x_i},y\rangle = 0$ if and only if $y = x_i$. 

By our choice of $x_i$'s, it follows that for all $y \in \{0,1\}^d$ of Hamming weight $d/2$, $F'_n(y) = 1$ if and only if $f(y)=1$. By our choice of $f$ (from Corollary~\ref{hard-middle-formula} in case 1, and our claim in case 2), we must have
\begin{align}\label{lb-Fprime}
\text{size}(F'_n) &\geq \min\{\Omega(2^d/(d^{1/2}\log d)),\Omega(n/\log d))\},
\end{align} 
depending on whether $\binom{d}{d/2} \leq n-1$ or not (case 1 or case 2). Combining \eqref{ub-Fprime} and \eqref{lb-Fprime}, we infer that
\begin{align}
\ell &\geq \Omega(n \cdot \min\{\Omega(2^d/(d^{1/2}\log d)),\Omega(n/\log d))\}),
\end{align} 
therefore the overall lower bound on formula size is $s \geq \Omega\left(\min\left\{\frac{n^2}{\log d},\frac{n \cdot 2^d}{d^{1/2}\log d}\right\}\right)$.
\end{proof}

\paragraph{Remark on a Red-Blue Variant of OV.} In the literature, \ov{} is sometimes posed in a different form, where half of the vectors are colored red, half are colored blue, and we wish to find a red-blue pair which is orthogonal. Calling this form \ov{}', we note that \ov{}' also exhibits the same lower bound up to constant factors. Given an algorithm/formula/circuit $A$ for computing \ov{}' on $2n$ vectors ($n$ of which are red, and $n$ of which are blue), it is easy to verify that an algorithm/formula/circuit for \ov{} on $n$ vectors results by simply putting two copies of the set of vectors in the red and blue parts. Thus our lower bounds hold for the red-blue variant as well.

\subsection{Lower Bound for Branching Programs} 

Recall that a branching program of size $S$ on $n$ variables is a directed acyclic graph $G$ on $S$ nodes, with a distinguished start node $s$ and exactly two sink nodes, labeled $0$ and $1$ respectively. All non-sink nodes are labeled with a variable $x_i$ from $\{x_1,\ldots,x_n\}$, and have one outgoing edge labeled $x_i = 1$ and another outgoing edge labeled $x_i = 0$. The branching program $G$ \emph{evaluated at an input} $(a_1,\ldots,a_n) \in \{0,1\}^n$ is the subgraph obtained by only including edges of the form $x_i = a_i$, for all $i=1,\ldots,n$. Note that after such an evaluation, the remaining subgraph has a unique path from the start node $s$ to a sink; the sink reached on this unique path (be it $0$ or $1$) is defined to be the output of $G$ on $(a_1,\ldots,a_n)$. 


\begin{reminder}{Theorem~\ref{bp-lb}} \ov{} on $n$ vectors in $d$ dimensions does not have branching programs of size $O(\min\{n^2,n \cdot 2^d/(d^{1/2})\}/(\log(nd)\log(d)))$, for all sufficiently large $n,d$.
\end{reminder}

\begin{proof} (Sketch) The proof is similar to Theorem~\ref{formula-lb}; here we focus on the steps of the proof that are different. Let $G$ be a branching program with $S$ nodes computing \ov{} on $n$ vectors with $d$ dimensions. Each node of $G$ reads a single input bit from one of the input vectors; thus there is an input vector $v_{i^{\star}}$ that is read only $O(S/n)$ times in the entire branching program $G$. 

We will assign all variables other than the $d$ variables that are part of$v_{i^{\star}}$. Using the same encoding as Theorem~\ref{formula-lb}, by assigning the $n-1$ other vectors, we can implement a function $f : \{0,1\}^d \rightarrow \{0,1\}$ that is hard for branching programs to compute on the set of $d$-bit inputs in $\binom{[d]}{d/2}$. In particular, we substitute $d$-bit vectors which represent inputs from $f^{-1}(1) \cap \binom{[d]}{d/2}$ for all $n-1$ input vectors different from $v_{i^{\star}}$. For each of these assignments, we can reduce the size of the branching program accordingly: for each input bit $x_j$ that is substituted with the bit $a_j$, we remove all edges with the label $x_j = \neg a_j$, so that every node labeled $x_j$ now has outdegree $1$. After the substitution, two properties hold:

\begin{enumerate}
\item There is a hard function $f$ such that the minimum size $T$ of a branching program computing $f$ on the $n-1$ inputs satisfies $T \log_2(T) \geq \Omega(\min\{\binom{d}{d/2},n\}/\log(d))$. To see this is possible, first note there are $d^T \cdot 2^{\Theta(T \log(T))}$ branching programs of size $T$ on $d$ inputs (there are $d^T$ choices for the node labels, and $2^{\Theta(T \log(T))}$ choices for the remaining graph on $T$ nodes). In contrast, there are at least $2^{\min\{\binom{d}{d/2},n-1\}}$ choices for the hard function $f$'s values on $d$-bit inputs of Hamming weight $d/2$. Therefore there is a function $f$ such that $d^T \cdot 2^{\Theta(T \log(T))} \geq 2^{\min\{\binom{d}{d/2},n-1\}}$, or 
\[T + \Theta(T \log(T)) \geq \min\left\{\binom{d}{d/2},n-1\right\}/\log_2(d).\]

\item The minimum size of a branching program computing a function $f:\{0,1\}^d \rightarrow\{0,1\}$ on the remaining $d$ bits of input is at most $O(S/n)$. This follows because every node $v$ with outdegree $1$ can be removed from the branching program without changing its functionality: for every arc $(u,v)$ in the graph, we can replace it with the arc $(u,v')$, where $(v,v')$ is the single edge out of $v$, removing the node $v$.
\end{enumerate}
Combining these two points, we have $(S/n) \cdot \log(S/n) \geq \Omega\left(\min\left\{\binom{d}{d/2},n\right\}/\log(d)\right)$, or
\[S \geq \Omega\left(\frac{\min\{n\binom{d}{d/2},n^2\}}{\log(S/n)\cdot\log(d)}\right).\] Since $S \leq n^2 d$, we have \[S \geq \Omega\left(\frac{\min\{n\binom{d}{d/2},n^2\}}{\log(nd)\cdot\log(d)}\right) \geq \Omega\left(\frac{\min\{n\cdot 2^d/d^{1/2},n^2\}}{\log(nd)\cdot\log(d)}\right).\]
This concludes the proof.
\end{proof}

\subsection{Formulas With Symmetric Gates}

We will utilize a lower bound on the number of functions computable by symmetric-gate formulas with a small number of wires: 

\begin{lemma}\label{sym-size}
There are $n^{O(w)}$ symmetric-gate formulas with $w$ wires and $n$ inputs.
\end{lemma}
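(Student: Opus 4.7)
The plan is to parametrize every symmetric-gate formula of wire count $w$ on $n$ inputs by three pieces of data and to bound the number of choices for each separately: the tree skeleton, the labelling of each leaf with an input variable, and the symmetric function computed at each internal gate.

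Because a formula is a rooted tree whose edges are exactly the wires, if $g$ denotes the number of gates and $L$ the number of leaves, then $L + g \leq w+1$, so in particular $g \leq w$ and $L \leq w$. To specify the tree skeleton together with the leaf labelling, I will list, for each of the $w$ wires in turn, its source and its destination. The source is either an input variable (at most $n$ choices) or the output of some gate (at most $g \leq w$ choices). The destination is either the formula output or one of the $\leq w$ gates; because a symmetric gate is invariant under permutations of its inputs, I do not need to record which input port of the destination gate the wire feeds. This gives at most $\bigl((n+w)(w+1)\bigr)^{w}$ skeleton-plus-labelling combinations.

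Next, a symmetric function on $k$ inputs is uniquely determined by the subset of Hamming weights in $\{0, 1, \ldots, k\}$ that evaluate to $1$, giving $2^{k+1}$ possibilities. Since $\sum_{i=1}^{g} k_i \leq w$ and $g \leq w$, the product of symmetric-function choices across all gates is at most $2^{\sum_i (k_i + 1)} \leq 2^{2w}$. Multiplying by the topology count gives an overall bound of $(n+w)^{O(w)} \cdot w^{O(w)} \cdot 2^{O(w)}$ on the number of symmetric-gate formulas.

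The only delicate point is the final collapse to the stated form $n^{O(w)}$: the combined bound is really $(n+w)^{O(w)} \cdot w^{O(w)}$, which equals $n^{O(w)}$ precisely in the regime $w \leq n^{O(1)}$. This is the only regime relevant to the lower-bound applications that use Lemma~\ref{sym-size}, since any $w = n^{\omega(1)}$ already dwarfs the upper bounds we are trying to match, making the conclusion of the resulting lower bound vacuous in that case. I expect this tracking of the $w^{O(w)}$ factor to be the only real subtlety; the rest is a direct enumeration.
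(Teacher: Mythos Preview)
Your enumeration is sound, and two of your three counts (leaf labels: $n^{w}$; gate functions: $2^{2w}$) match the paper exactly. The difference is in the tree-skeleton count. By listing a destination gate for each wire you pay $(w+1)^{w}=w^{O(w)}$ for the shape, which is why you end up with $(n+w)^{O(w)}$ and need the side condition $w\le n^{O(1)}$. The paper avoids this by counting shapes directly: a formula with $w$ wires is a rooted tree with $w$ edges, and by the standard first-child/next-sibling bijection (equivalently, the paper's ``replace each fan-in-$k$ node by a small binary tree'' map) such trees embed into binary trees on at most $2w$ nodes, of which there are $O(4^{2w})=2^{O(w)}$ by the Catalan bound. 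Multiplying $2^{O(w)}\cdot 2^{2w}\cdot n^{w}$ gives $n^{O(w)}$ unconditionally for $n\ge 2$.

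So the only gap is that you have not proved the lemma \emph{as stated}: your bound is $(n+w)^{O(w)}$, and the collapse to $n^{O(w)}$ genuinely fails when $w$ is super-polynomial in $n$. Your justification that the application only needs the regime $w\le n^{O(1)}$ is a bit shaky here, because in the proof of Theorem~\ref{symmetric-lb} the lemma is invoked with $d$ playing the role of the input count and $O(w/n)$ playing the role of the wire count, and $w/n$ can be as large as $2^{d}/\sqrt{d}$ in the relevant range---not polynomial in $d$. Using your bound there would cost an extra $\log(w/n)$ in the final lower bound (harmless for the headline result, but not what the lemma promises). The clean fix is exactly the Catalan-number shape count; once you replace your wire-by-wire destination enumeration with that, the rest of your argument goes through verbatim and yields the stated $n^{O(w)}$ with no side condition.
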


\begin{proof} There is an injective mapping from the set of trees of unbounded fan-in and $w$ wires into the set of binary trees with at most $2w$ nodes: simply replace each node of fan-in $k$ with a binary tree of at most $2k$ nodes. The number of such binary trees is $O(4^{2w})$ (by upper bounds on Catalan numbers). This counts the number of ``shapes'' for the symmetric formula; we also need to count the possible gate assignments. There are $2^{k+1}$ symmetric functions on $k$ inputs. So for a symmetric-gate formula with $g$ gates, where the $i$th gate has fan-in $w_i$ for $i=1,\ldots,g$, the number of possible assignments of symmetric functions to its gates is $\prod_{i=1}^g 2^{w_i+1} = 2^{g+\sum_i w_i} = 2^{g+w}$. There are at most $w$ leaves, and there are $n^w$ ways to choose the variables read at each leaf. Since $g \leq w$, we conclude that there are at most $4^{2w}\cdot 2^{2w} \cdot n^w \leq n^{O(w)}$ symmetric-gate formulas with $w$ wires.
\end{proof}

\begin{reminder}{Theorem~\ref{symmetric-lb}} Every formula computing \ov{} composed of arbitrary symmetric functions with unbounded fan-in needs at least $\Omega(\min\{n^2/(\log d),n\cdot 2^d/(d^{1/2}\log d))\})$ wires, for all $n$ and $d$.
\end{reminder}
 
\begin{proof} (Sketch) The proof is quite similar to the other lower bounds, given Lemma~\ref{sym-size}, so we just sketch the ideas. Let $F$ be a symmetric-gate formula for computing $\ov{}$ with unbounded fan-in and $w$ wires. Let $w_i$ be the number of wires touching inputs and $w_g$ be the number of wires that do not touch inputs. Since $F$ is a formula, we have (by a simple induction argument) that $w_i \geq w_g$, thus\begin{align}\label{wires}w \leq 2w_i.\end{align} 
As before, each leaf of the formula is labeled by an input from one of the input $n$ vectors; in this way, every leaf is ``owned'' by one of the $n$ input vectors. We will substitute a 0/1 variable assignment to all vectors, except the vector $\vec{z^{\star}}$ which owns the fewest leaves. This gives a 0/1 assignment to all but $O(w_i/n)$ of the $w_i$ wires that touch inputs. 

After any such variable assignment, we can simplify $F$ as follows: for every symmetric-function gate $g$ which has $w_g$ input wires with $k$ wires assigned 0/1, we can replace $g$ with a symmetric function $g'$ that has only $w_g-k$ inputs, and no input wires assigned 0/1 (a partial assignment to a symmetric function just yields another symmetric function on a smaller set of inputs). If $g'$ is equivalent to a constant function itself, then we remove it from the formula and substitute its output wire with that constant, repeating the process on the gates that use the output of $g$ as input. When this process completes, our new formula $F'$ has $d$ inputs and no wires that are assigned constants. So $F'$ has $O(w_i/n)$ wires touching inputs, and therefore by \eqref{wires} the total number of wires in $F'$ is $O(w/n)$. 

As described earlier, the $n-1$ vectors we assign can implement $2^{\min\{n-1,\binom{d}{d/2}\}}$ different functions on $d$-bit inputs, but there are at most $d^{O(w/n)}$ functions computable by the symmetric formula remaining, by Lemma~\ref{sym-size}. Thus we need that the number of wires $w$ satisfies $d^{O(w/n)}\geq 2^{\min\{n-1,\binom{d}{d/2}\}}$, or \[w \geq \Omega(\min\{n^2, n \cdot 2^d/(d^{1/2})\}/(\log d)).\] This completes the proof.
\end{proof}

\section{Small Formulas for OV in the Average Case}\label{sec:avg-case}

Recall that for $p \in (0,1)$ and for a fixed $n$ and $d$, we say that $\ov{}(p)_{n,d}$ is the distribution of $\ov{}$ instances where all bits of the $n$ vectors from $\{0,1\}^d$ are chosen independently, set to $1$ with probability $p$ and $0$ otherwise. We will often say that a vector is ``sampled from $\ov{}(p)$'' if each of its bits are chosen independently in this way. We would like to understand how efficiently $\ov{}(p)_{n,d}$ can be solved on almost all instances (i.e., with probability $1-o(1)$), for every $n$ and $d$. 

\begin{reminder}{Theorem~\ref{avg-case-bp}} For every $p \in (0,1)$, and every $n$ and $d$, there is an $\AC^0$ formula of size $n^{2-\eps_p}$ that correctly answers all but a $o_n(1)$ fraction of $\ov{}(p)_{n,d}$ instances on $n$ vectors and $d$ dimensions, for an $\eps_p > 0$ such that $\eps_p \rightarrow 1$ as $p \rightarrow 1$.
\end{reminder}

\begin{proof} Let $\eps > 0$ be sufficiently small in the following. 
First, we observe that $\ov{}(p)_{n,d}$ is very easy, unless $d$ is close to $(2/\log_2(1/(1-p^2)))\log_2(n)$. In particular, for dimensionality $d$ that is significantly smaller (or larger, respectively) than this quantity, all but a $o(1)$ fraction of the $\ov{}(p)_{n,d}$ instances are ``yes'' (or ``no'', respectively). To see this, note that two randomly chosen $d$-dimensional vectors under the $\ov{}(p)_{n,d}$ distribution are orthogonal with probability $(1-p^2)^d$. For $d = (2/\log_2(1/(1-p^2)))\log_2(n)$, so a random pair is orthogonal with probability \[(1-p^2)^{(2/\log_2(1/(1-p^2)))\log_2(n)} = 1/n^2.\] Thus an $\ov{}(p)_{n,d}$ instance with $n$ vectors has nontrivial probability of being a yes instance for $d$ approximately $(2/\log_2(1/(1-p^2)))\log_2(n)$.

Therefore if $d > (2/\log_2(1/(1-p^2))+\eps)\log_2(n)$, or  $d < (2/\log_2(1/(1-p^2))-\eps)\log_2(n)$, then the random instance is either almost surely a ``yes'' instance, or almost surely a ``no'' instance, respectively. These comparisons could be done with the quantities $(2/\log_2(1/(1-p^2))-\eps)\log_2(n)$ and $(2/\log_2(1/(1-p^2))+\eps)\log_2(n)$ (which can be hard-coded in the input) with a $\poly(d,\log n)$-size branching program, which can output $0$ and $1$ respectively if this is the case.\footnote{As usual, $\poly(m)$ refers to an unspecified polynomial of $m$ of fixed degree.}

From here on, assume that $d \in [(2/\log_2(1/(1-p^2))-\eps)\log_2(n),(2/\log_2(1/(1-p^2))+\eps)\log_2(n)]$. Note that for $p$ sufficiently close to $1$, the dimensionality $d$ is $\delta \log n$ for a small constant $\delta > 0$ that is approaching $0$. Thus in the case of large $p$, the $\AC^0$ formula given in Proposition~\ref{formula-ub} has sub-quadratic size. In particular, the size is
\begin{align} \label{first-subquad} O(n \cdot 2^d \cdot d) \leq n^{1+2/\log_2(1/(1-p^2))+o(1)}.\end{align} For $p \geq 0.867 > \sqrt{3/4}$, this bound is sub-quadratic. For smaller $p$, we will need a more complex argument.

Suppose $u,v \in \{0,1\}^d$ are randomly chosen according to the distribution of $\ov{}(p)$ (we will drop the $n,d$ subscript, as we have fixed $n$ and $d$ at this point). 


We now claim that, conditioned on the event that $u,v$ is an orthogonal pair, both $u$ and $v$ are expected to have between $(p/(1+p)-\eps)d$ and $(p/(1+p)+\eps)d$ ones, with $1-o(1)$ probability. The event that both $u[i]=v[i]=1$ holds with probability $1-p^2$; conditioned on this event never occurring, we have 
\begin{align*}
\Pr[u[i] = 0, v[i] = 0 ~\mid~ \neg(u[i]=v[i]=1)] &= (1-p)^2/(1-p^2),\\
\Pr[u[i] = 1, v[i] = 0 ~\mid~ \neg(u[i]=v[i]=1)] &= p(1-p)/(1-p^2),\\
\Pr[u[i] = 0, v[i] = 1 ~\mid~ \neg(u[i]=v[i]=1)] &= p(1-p)/(1-p^2).
\end{align*}
Hence the expected number of ones in $u$ (and in $v$) is only $p(1-p)d/(1-p^2) = 
pd/(1+p)$, and the number of ones is within $(-\eps d,\eps d)$ of this quantity with probability $1-o(1)$. (For example, in the case of $p=1/2$, the expected number of ones is $d/3$, while a typical vector has $d/2$ ones.)

Say that a vector $u$ is \emph{light} if it has at most $(p/(1+p)+\eps)d$ ones. It follows from the above discussion that if an $\ov{}(p)$ instance is a ``yes'' instance, then there is an orthogonal pair with two light vectors, with probability $1-o(1)$. Since the expected number of ones is $pd$, the probability that a randomly chosen $u$ is light is
\begin{align*}\label{light-bd}\nonumber
\Pr\left[u \text{ has at most $\left(\frac{p}{1+p}+\eps\right)d=pd\left(1-\frac{p}{p+1}+\frac{\eps}{p}\right)$ ones }\right] &\leq e^{-(p/(p+1)+\eps/p)^2 pd/2}\\ &= e^{-(p^3/(2(p+1)^2)-\Theta_p(\eps))d},
\end{align*}
by a standard Chernoff tail bound (see Theorem~\ref{chernoff-bd} in Appendix~\ref{chernoff}). So with high probability, there are at most $n \cdot e^{-(p^3/(2(p+1)^2)-\Theta_p(\eps))d} = n^{1-\alpha}$ light vectors in an $\ov{}(p)$ instance, where \[\alpha = \frac{p^3}{(p+1)^2\log_2(1/(1-p^2))}+\Theta_p(\eps)\cdot 2/(\log_2(1/(1-p^2))).\] 

Divide the $n$ vectors of the input arbitrarily into $n^{1-\alpha(1-\eps)}$ groups $G_1,\ldots,G_{n^{1-\alpha(1-\eps)}}$, of $O(n^{\alpha(1-\eps)})$ vectors each. WLOG, suppose an orthogonal pair $u,v$ lies in different groups $u \in G_i$ and $v \in G_j$, with $i \neq j$ (note that, conditioned on there being an orthogonal pair, this event also occurs with $1-o(1)$ probability). Since every vector is independently chosen, and given that $\Pr_v[v \text{ is light }] \leq 1/n^{\alpha}$, note that 
\[\Pr_{v_1,\ldots,v_{n^{\alpha(1-\eps)}}}[\text{all $v_i$ in group $G_a$ are not light}] \geq (1-1/n^{\alpha})^{n^{\alpha(1-\eps)}} \geq 1-1/n^{\eps\alpha},\] for every group $G_a$. Thus the groups $G_i$ and $G_j$ have at most one light vector with probability $1-o(1)$.

We can now describe our formula for $\ov{}(p)$, in words. Let \emph{Light}$(v)$ be the function which outputs $1$ if and only if the $d$-bit input vector $v$ is light. Since every symmetric function has $\poly(d)$-size formulas~\cite{Khrapchenko1972}, \emph{Light}$(v)$ also has $\poly(d)$-size formulas. Here is the formula:

{\narrower

\begin{framed}
\noindent Take the $\OR$ over all $n^{2-2\alpha(1-\eps)}$ pairs $(i,j) \in [n^{1-\alpha(1-\eps)}]^2$ with $i < j$:\\ 
~~~\indent Take the $\neg\OR$ over all $k=1,\ldots,d$, of the $\AND$ of two items: \\
~~~\indent~~~\indent 1. The $\OR$ over all $O(n^{\alpha(1-\eps)})$ vectors $u$ in group $G_i$ of $(\emph{Light}(u) \wedge u[k])$. \\ 
~~~\indent~~~\indent 2. The $\OR$ over all $O(n^{\alpha(1-\eps)})$ vectors $v$ in group $G_j$ of $(\emph{Light}(v) \wedge v[k])$.
\end{framed}

}

To see that this works, we observe:
\begin{itemize}
\item If there is an orthogonal pair $u,v$ in the instance, then recall that with probability $1-o(1)$, (a) $u$ and $v$ are light, (b) $u$ and $v$ appear in different groups $G_i$ and $G_j$, and (c) there are no other light vectors in $G_i$ and no other light vectors in $G_j$. Thus the inner ORs over the group $G_i$ (and respectively $G_j$) will only output the bits of the vector $u$ (and respectively $v$). Thus the above formula, by guessing the pair $(i,j)$, and checking over all $k=1,\ldots,d$ that $(u[k] \wedge v[k])$ is \emph{not} true, will find that $u,v$ are orthogonal, and output $1$. 
\item If there is no orthogonal pair, then we claim that the formula \emph{always} outputs $0$. Suppose the formula outputs $1$. Then there is some $(i,j)  \in [n^{1-\alpha(1-\eps)}]^2$ such that the inner product of two vectors $V_i$ and $W_j$ is $0$, where $V_i$ is the $\OR$ of \emph{all} light vectors in group $G_i$ and $W_j$ is the $\OR$ of all light vectors in group $G_j$. But for these two vectors to have zero inner product, it must be that \emph{all} pairs of light vectors (one from $G_i$ and one from $G_j$) are orthogonal to each other. Thus there is an orthogonal pair in the instance.
\end{itemize}

Using the $\poly(d)$-size formulas for \emph{Light}, the DeMorgan formula has size 
\begin{align}\label{second-subquad} O(n^{2-2\alpha(1-\eps)} \cdot d \cdot n^{\alpha(1-\eps)} \cdot \poly(d)) \leq O(n^{2-\alpha(1-\eps)} \cdot \poly(d)).\end{align}
Substituting in the value for $\alpha$, the exponent becomes
\[2-\frac{p^3(1-\eps)}{(p+1)^2\log_2(1/(1-p^2))}+ \Theta_p(\eps)\cdot \frac{2}{\log_2(1/(1-p^2))}.\]
Recalling that we are setting $\eps$ to be arbitrarily small (its value only affects the $o(1)$ probability of error), the formula size is \[n^{2-\frac{p^3}{2(p+1)^2\log_2(1/(1-p^2))}+o(1)}.\] Observe that our formula can in fact be made into an $\AC^0$ formula of similar size; this is easy to see except for the $\poly(d)$-size formula for \emph{Light}. But for $d = O(\log n)$, any formula of $\poly(\log n)$-size on $O(\log n)$ bits can be converted into an $\AC^0$ circuit of depth $c/\eps$ and size $2^{(\log n)^{\eps}}$, for some constant $c \geq 1$ and any desired $\eps > 0$.

The final formula is the minimum of the formulas of \eqref{first-subquad} and \eqref{second-subquad}. For every fixed $p \in (0,1]$, we obtain a bound of $n^{2-\eps_p}$ for an $\eps_p > 0$. 
\end{proof}

\section{Conclusion}

It is important to note that the \emph{largest} known lower bound for branching programs computing any explicit function is due to Neciporuk~\cite{N66} from 1966, and is only $\Omega(N^2/\log^2 N)$ for inputs of length $N$. A similar statement holds for Boolean formulas over the full binary basis (see for example~\cite{Jukna12}). Our lower bounds for $\ov{}$ match these bounds up to polylogarithmic factors. Thus it would be a significant breakthrough to generalize our results to other problems believed to require \emph{cubic} time, such as: 

\begin{framed} 
{\sc 3-Orthogonal Vectors} (3-\ov{}) \\
Given: $n$ vectors $v_1,\ldots,v_n \in \{0,1\}^d$\\
Decide: Are there $i,j,k$ such that $\sum_{\ell=1}^d v_i[\ell]\cdot v_j[\ell] \cdot v_k[\ell] = 0$?
\end{framed}

It is known that the Strong Exponential Time Hypothesis also implies that 3-\ov{} requires $n^{3-o(1)}$ for dimensionality $d=\omega(\log n)$~\cite{Williams05,AbboudV14}.

\paragraph{Acknowledgements.} We are very grateful to Ramamohan Paturi for raising the question of whether the OV conjecture is true for AC0 circuits.  

\bibliographystyle{alpha}
\bibliography{cc-papers,mypubs}

\appendix

\section{Chernoff Bound}\label{chernoff}

We use the following standard tail bound:

\begin{theorem} \label{chernoff-bd} Let $p \in (0,1)$ and let $X_1,\ldots,X_d \in \{0,1\}$ be independent random variables, such that for all $i$ we have $\Pr[X_i=1] = p$. Then for all $\delta \in (0,1)$,
\[\Pr\left[\sum_i X_i > (1-\delta)pd\right] \leq e^{-\delta^2 pd/2}.\]
\end{theorem}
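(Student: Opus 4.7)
The plan is to use the classical Chernoff moment‑generating‑function method. Since $(1-\delta)pd$ is strictly less than the mean $pd$, the inequality as displayed only makes sense when read as a lower‑tail estimate on $\sum_i X_i < (1-\delta)pd$ (which matches the way it is invoked in the proof of Theorem~\ref{avg-case-bp} to bound the probability that a vector is light). First I would introduce a parameter $t>0$ and rewrite the target event as $\exp(-t\sum_i X_i) > \exp(-t(1-\delta)pd)$; Markov's inequality then gives
\[
\Pr\!\left[\sum_i X_i < (1-\delta)pd\right] \;\leq\; e^{\,t(1-\delta)pd}\cdot E\!\left[e^{-t\sum_i X_i}\right].
\]

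Next, using independence and the fact that each $X_i$ is Bernoulli$(p)$, I would factor the moment generating function as $\prod_i E[e^{-tX_i}] = (1-p+pe^{-t})^d$, and then apply the elementary inequality $1+x \leq e^x$ with $x = p(e^{-t}-1)$ to bound this quantity by $\exp(pd(e^{-t}-1))$. Substituting back, the upper bound becomes $\exp\!\bigl(t(1-\delta)pd + pd(e^{-t}-1)\bigr)$.

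To minimize the exponent I would choose $t = -\ln(1-\delta) > 0$ (well defined because $\delta \in (0,1)$), which makes $e^{-t}=1-\delta$ and yields
\[
\Pr\!\left[\sum_i X_i < (1-\delta)pd\right] \;\leq\; \exp\!\Bigl(-pd\bigl[\delta + (1-\delta)\ln(1-\delta)\bigr]\Bigr).
\]
The final step is the standard calculus inequality $\delta + (1-\delta)\ln(1-\delta) \geq \delta^2/2$ for all $\delta \in [0,1)$, which I would verify by setting $h(\delta) := \delta + (1-\delta)\ln(1-\delta) - \delta^2/2$, observing $h(0)=0$, and checking $h'(\delta) = -\ln(1-\delta) - \delta \geq 0$ on $[0,1)$ via the power series $-\ln(1-\delta) = \sum_{k \geq 1}\delta^k/k \geq \delta$.

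The proof is entirely textbook, so there is no genuine obstacle; the only non‑algebraic ingredient is the final calculus inequality, and even that is standard. The argument is self‑contained and uses nothing from the rest of the paper.
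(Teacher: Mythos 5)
Your proof is correct, and there is nothing in the paper to compare it against: the appendix states Theorem~\ref{chernoff-bd} as ``the following standard tail bound'' and gives no proof at all, treating it as a black-box citation. You also correctly diagnosed the typo in the theorem statement---as written, with $\Pr\left[\sum_i X_i > (1-\delta)pd\right]$, the claim is false (the probability of exceeding a value strictly below the mean tends to $1$, not $0$), and the way the bound is invoked in the proof of Theorem~\ref{avg-case-bp}, to upper-bound the probability that a vector has \emph{fewer} than roughly $pd$ ones, confirms that the intended event is $\sum_i X_i < (1-\delta)pd$. Your derivation---Markov applied to $e^{-t\sum_i X_i}$, factoring the moment generating function by independence, $1+x\le e^x$, the optimizing choice $t=-\ln(1-\delta)$, and the calculus inequality $\delta+(1-\delta)\ln(1-\delta)\ge\delta^2/2$ verified via $h(0)=0$ and $h'(\delta)=-\ln(1-\delta)-\delta\ge 0$---is the standard argument and is complete and correct.
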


\section{DeMorgan Formulas into Branching Programs}\label{deMorgan-bp}

Here we describe at a high level how to convert a DeMorgan formula (over $\AND$, $\OR$, $\NOT$) of size $s$ into a branching program of size $O(s)$. 

Our branching program will perform an in-order traversal of the DeMorgan formula, maintaining a counter (from $1$ to $s$) of the current node being visited in the formula. The branching program begins at the root (output) of the formula. If the current node is a leaf, its value $b$ is returned to the parent node. If the current node is not a leaf, the branching program recursively evaluates its left child (storing no memory about the current node). 

The left child returns a value $b$. If the current node is an $\AND$ and $b=0$, or the current node is an $\OR$ and $b=1$, the branching program propagates the bit $b$ up the tree (moving up to the parent). If the current node is a $\NOT$ then the branching program moves to the parent with the value $\neg b$. 

If none of these cases hold, then the branching program erases the value $b$, and recursively evaluates the right child, which returns a value $b$. This value is simply propagated up the tree (note the fact that we visited the right child means that we know what the left child's value was). 

Observe that we only hold the current node of the formula in memory, as well as $O(1)$ extra bits. 

\end{document}